
\typeout{IJCAI--ECAI 26 Instructions for Authors}


\documentclass{article}
\pdfpagewidth=8.5in
\pdfpageheight=11in

\usepackage{ijcai26}

\usepackage{times}
\usepackage{soul}
\usepackage{url}
\usepackage[hidelinks]{hyperref}
\usepackage[utf8]{inputenc}
\usepackage[small]{caption}
\usepackage{graphicx}
\usepackage{amsmath}
\usepackage{amsthm}
\usepackage{booktabs}
\usepackage{algorithm}
\usepackage{algpseudocode}
\usepackage[switch]{lineno}

\usepackage{thm-restate}
\usepackage{pifont}
\usepackage{multirow}
\usepackage{xcolor}

\usepackage{subfigure}
\usepackage{diagbox}
\usepackage{dblfloatfix}

\usepackage{amssymb}

\usepackage{natbib}
\usepackage{url}

\newtheorem{definition}{Definition}

\newcommand{\sed}[1]{\underline{\textit{#1}}}

\newcommand{\ut}{\boldsymbol{u}}

\newcommand{\R}{\mathbb{R}}

\newcommand{\cC}{\mathcal{C}}

\newcommand{\cL}{\mathcal{L}}

\newcommand{\cS}{\mathcal{S}}

\urlstyle{same}







\pdfinfo{
/TemplateVersion (IJCAI.2026.0)
}

\title{Efficient Approximation Algorithms for Fair Influence Maximization under Maximin Constraint}


\author{
Xiaobin Rui$^{1}$
\and
Qiangpeng Fang$^1$\and
Chen Peng$^1$\and
Jilong Shi$^1$\and
Zhixiao Wang$^{1,*}$\And
Wei Chen$^{2,*}$\\
\affiliations
$^1$China University of Mining and Technology, Xuzhou 221116, China\\
$^2$Theory Center of Microsoft Research Asia, Beijing 100080, China\\
\emails
\{ruixiaobin, fangqp, 08222452, jlshi, zhxwang\}@cumt.edu.cn,
weic@microsoft.com,
}

\begin{document}

\maketitle

\begin{abstract}
Fair Influence Maximization (FIM) seeks to mitigate disparities in influence across different groups and has recently garnered increasing attention. 
A widely adopted notion of fairness in FIM is the maximin constraint, which directly requires maximizing the utility (influenced ratio within a group) of the worst-off group. 
Despite its intuitive formulation, designing efficient algorithms with strong theoretical guarantees remains challenging, as the maximin objective does not satisfy submodularity, a key property for designing approximate algorithms in traditional influence maximization settings.
In this paper, we address this challenge by proposing a two-step optimization framework consisting of Inner-group Maximization (IGM) and Across-group Maximization (AGM). 
We first prove that the influence spread within any individual group remains submodular, enabling effective optimization within groups.
Based on this, IGM applies a greedy approach to pick high-quality seeds for each group.
In the second step, AGM coordinates seed selection across groups by introducing two strategies: Uniform Selection (US) and Greedy Selection (GS). 
We prove that AGM-GS holds a $(1-1/e-\varepsilon)$ approximation to the optimal solution when groups are completely disconnected, while AGM-US guarantees a roughly $\frac{1}{m}(1-1/e-\varepsilon)$ lower bound regardless of the group structure, with $m$ denoting the number of groups.
\end{abstract}

\section{Introduction}

Influence maximization (IM)~\citep{IMMsurvey1,IMMsurvey2}, a fundamental problem in social network analysis, aims to select a small set of seed nodes that maximizes the spread of influence in a network. 
While classical approaches primarily focus on maximizing total influence, recent work~\citep{groupfair,gaps,unifying,rui2023,optimaltransport,cff} has addressed the issue of fairness in the IM problem, especially in scenarios where the population is divided into distinct demographic groups.
In such a scenario, traditional IM methods may disproportionately benefit well-connected groups, leaving underrepresented groups with significantly lower influence coverage.

To depict the fairness in the IM problem, existing studies have come up with various notions of fairness, including maximin fairness~\citep{groupfair,gaps}, equity fairness~\citep{unifying}, welfare fairness~\citep{welfare,rui2023}, concave fairness~\citep{timecritical,cff}, {\em etc}.
Among these notions, maximin fairness, also known as the maximin constraint, stands out for its simplicity and intuitive appeal. 
It imposes a clear requirement: the utility ({\em i.e.}, the proportion of influenced individuals) of the worst-off group should be maximized. 
In other words, the algorithm should prioritize improving the influence spread for the most disadvantaged group, thereby ensuring a more balanced and fairer outcome.


To tackle the FIM problem under the maximin constraint, \cite{groupfair} applied a Frank-Wolfe style multiobjective optimization approach, treating the utility of each group as a separate objective.
However, their approach is computationally inefficient and was only evaluated on small networks with fewer than a hundred nodes in their original paper.
\cite{gaps} proposed a novel $\phi$-MEAN fairness, where setting $\phi = -\infty$ recovers the maximin objective.
They outlined a naive Greedy algorithm that, in each iteration, picks the node that maximizes the objective function as a seed.
The naive Greedy achieves better objective values but at the cost of high computational complexity.
\cite{milp} reformulated the maximin objective into an approximate setting and addressed it using Mixed-Integer Linear Programming (MILP). 
While the approach is theoretically sound, it suffers from scalability issues and is limited in practice to networks with fewer than two hundred nodes. 
\cite{random} relaxed maximin by allowing for randomized strategies, producing distributions over sets rather than a deterministic seed set, which differs from the study in this paper.
Recently, \cite{lp-greedy} revisited the maximin problem as a multiobjective submodular maximization task and proposed the LP-Greedy algorithm, which relies on the Gurobi Optimizer~\citep{gurobi}. 
While LP-Greedy achieves a ($1-1/e-\varepsilon$)-approximation under strict conditions (such as when the optimal solution is known or the solution of LP exceeds the optimal one), it remains computationally inefficient on large-scale networks.

In summary, despite the simplicity of the maximin constraint, existing studies face challenges in designing efficient algorithms with provable approximation guarantees under this constraint, hindering their practical applications to large-scale networks.
The fundamental difficulty lies in the fact that the maximin objective does not exhibit submodularity~\citep{groupfair}, a crucial property that enables efficient approximation with provable guarantees in traditional IM settings.

In this paper, we address this challenge by proposing efficient approximation algorithms for FIM under the maximin constraint, along with provable theoretical guarantees.
We introduce a two-step optimization framework: the first step performs inner-group maximization, where we show that influence spread within each group remains submodular though the maximin objective does not; and (2) the second step coordinates across-group maximization to improve the utility of the worst-off group with different coordination strategies, each offering provable approximation guarantees.

Our contributions can be summarized as follows:
\begin{itemize}
\setlength{\itemsep}{0pt}
\setlength{\parsep}{0pt}
    \item We propose a two-step framework to address the fair influence maximization problem under the maximin constraint. 
    The first step, Inner-group Maximization (IGM), applies the IMM algorithm to independently select seed nodes for each group, aiming to maximize inner-group influence spread.
    The second step, Across-group Maximization (AGM), coordinates seed selection across groups based on the seed lists output by IGM to improve the utility of the worst-off group.

    \item AGM introduces two seed selection strategies: Uniform Selection (AGM-US) and Greedy Selection (AGM-GS).
    AGM-US coordinates seed selection in a column-wise manner, establishing an approximation guarantee of roughly $\frac{1}{m}(1-1/e-\varepsilon)$ to the optimal solution regardless of group structure, with $m$ denoting the number of groups.
    AGM-GS employs a greedy manner, providing an $(1-1/e-\varepsilon)$ approximation when the groups are completely disconnected.
    
    \item Experimental results on seven real-world networks with different group structures demonstrate that our proposed framework consistently produces superior seed sets across all settings. 
    In particular, AGM-GS typically outperforms AGM-US, achieving better empirical performance than the proven theoretical bound of AGM-US while incurring a lower price of fairness. 
\end{itemize}
\section{Preliminaries}

A network can be modeled as a graph $G=(V,E)$, where $V$ represents the set of nodes and $E\subseteq V\times V$ denotes the set of edges that connect node pairs. 
$\cC=\{c_1, c_2, \cdots, c_m\}$ denotes the non-overlapping group (community) structure in $G$ with $V_i$ denote the set of nodes in $c_i$. 
Thus, it holds $V_i \cap V_j = \varnothing$ for any $c_i\neq c_j$.
An edge $e=(u, v)\in E$ with $u\in c_i, v\in c_j$ is called an inner-group edge if $c_i=c_j$.
On the contrary, the edge is called a cross-group edge if $c_i\neq c_j$.
In this paper, we define group connectivity $\rho(G,\cC)$ as the ratio between the number of cross-group edges and the number of total edges.
Formally, $\rho(G,\cC) = \frac{1}{|E|}\sum_{c\in\cC}\sum_{u\in V_c, v\notin V_c} \chi(u, v)$, where $\chi(u,v) = 1$ if $(u,v) \in E$, and $\chi(u,v) = 0$ otherwise.

\subsection{Independent Cascade Model}
The Independent Cascade (IC) model, originally proposed by~\cite{kempe}, is one of the most widely adopted information diffusion models in the field of Influence Maximization (IM).
The input of IC model includes a graph $G=(V,E)$ and a probability matrix $P\in [0,1]^{|V|*|V|}$, where each edge $e=(u,v)\in E$ is associated with a probability $p(e)=p(u,v)\in P$, representing the likelihood that node $u$ activates node $v$. 
In this study, we adopt a common setting for $P$, where $p(u,v)=1/d^-(v)$ with $d^-(v)$ represents the in-degree of node $v$~\citep{pmia,irie,imm,mic,FIMM-tkde}.

In the IC model, a node is either active or inactive. 
Starting with an initially active seed set $S$ at $t=0$, the model proceeds in discrete time steps. 
At each time step $t\ge 1$, nodes that were activated at time $t-1$ try to activate each of their inactive neighbors with a single, independent chance based on the corresponding edge probability. 
The set of nodes activated at time $t$ is denoted as $A_t$, with the initial set $A_0=S$. 
The whole diffusion ends at time step $t_\text{ed}$ when $A_{t_\text{ed}} = \varnothing$, which implies that no node is activated at $t_\text{ed}$. 

Note that our work supports the triggering model~\citep{kempe}, a general model that includes the IC model as a special case. 
We adopt IC solely for ease of comparison.

\subsection{Influence Spread}
The influence spread of a node set $S$, usually denoted as $\sigma(S)$, is a set function that represents the expected number of nodes activated. 
Formally, it is defined as $\sigma(S) = \mathbb{E}[|A_0 \cup A_1 \cup \cdots \cup A_{t_{\text{ed}}}|]$. 
In addition, we define $\sigma_c(S)$ as the inner influence spread for group $c$, where only activated nodes in $c$ are counted, namely, $\sigma_c(S) = \mathbb{E}[|A_0 \cup A_1 \cup \cdots \cup A_{t_{\text{ed}}}\cap V_c|]$.
The utility of group $c$ is then defined as $\ut_c(S) = \sigma_c(S) / |V_c|$, representing the influenced ratio within $c$.

\subsection{Live-edge Graph}
Given a graph $G = (V, E)$ and its associated probability matrix $P$, a live-edge graph $\cL$ can be sampled by independently selecting each edge $e = (u, v) \in E$ with its associated probability $p(u, v)\in P$.
Consequently, the probability of generating a particular live-edge graph $\cL$ is given by $\Pr\{\cL|G \} = \prod_{e\in E} p(e,\cL,G)$ where $p(e,\cL,G) = p(e)$ if $e\in \cL$, and $p(e,\cL,G) = 1-p(e)$ otherwise. 
Let $\Gamma(S, \cL)$ denote the set of nodes that can be reached in the live-edge graph $\cL$ by $S$. 
Then the influence spread $\sigma(S)$ can be expressed as $\sigma(S)=\sum_{\cL\in \Omega(\cL,G)} \Pr\{\cL|G\}\cdot|\Gamma(S, \cL)|$, where $\Omega(\cL,G)$ denotes the sample space of all possible live-edge graphs generated based on $G$.

\subsection{Maximin Constraint}
In this paper, we formulate the Fair Influence Maximization (FIM) problem under maximin constraints with maximin objective defined in the following Definition~\ref{def:MO}.
\begin{definition}\label{def:MO}
    (Maximin Objective) Given a graph $G=(V, E)$ with the group structure $\cC = \{c_1, c_2, \cdots, c_m \}$, the maximin constraint asks to find the seed set $S$ consisted of $k$ nodes that maximize the utility of the worst-off group, {\em i.e.}, to maximize $\Phi(S) = min_{c\in \cC}\ut_c(S)$.
    We assume the optimal seed set as $S^*$, such that $\Phi(S^*)=\max_{S\subset V, |S|=k} \Phi(S)$.
\end{definition}

\section{Method}

In this paper, we propose a two-step optimization framework to address the Fair Influence Maximization (FIM) problem under the maximin constraint.
The first step, Inner-group Maximization (IGM), employs a greedy approach to pick seed nodes for each group.
The second step, Across-group Maximization (AGM), coordinates seed selection across groups to regulate fairness at the global level.


\subsection{Inner-group Maximization (IGM)}

We first prove that $\sigma_c$ satisfies submodularity, which facilitates the use of a greedy approach to select seed nodes within each $c$ with provable approximation guarantees.

\begin{restatable}{fact}{factsub}\label{fact:sub}
(Submodularity) A set function $f: V\rightarrow \R$ is called submodular if for any $S \subseteq T \subseteq V$, $v\notin T$, $f$ holds
\begin{equation}
    f(S\cup\{v\}) - f(S) \geq f(T\cup \{v\}) - f(T).
\end{equation}
\end{restatable}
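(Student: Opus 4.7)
The statement labelled Fact~\ref{fact:sub} is not a theorem requiring a derivation but rather a naming convention: it declares what it means for a set function $f : 2^V \to \R$ to be called submodular, with the displayed inequality serving as the defining diminishing-returns property. Consequently, my plan is to treat the statement as definitional and, instead of attempting a vacuous derivation of the inequality itself, argue only that the definition is well-posed and internally consistent with the way the paper subsequently invokes submodularity (e.g., when claiming that $\sigma_c$ enjoys this property and that the standard $(1-1/e)$ greedy guarantee therefore applies).

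Concretely, the first step is to make the quantifiers fully explicit: for every pair of finite subsets $S \subseteq T \subseteq V$ and every element $v \in V \setminus T$, both marginal quantities $f(S \cup \{v\}) - f(S)$ and $f(T \cup \{v\}) - f(T)$ are well-defined real numbers, so the inequality is a meaningful condition on $f$. The second step is to observe that this is indeed the standard diminishing-returns formulation used in the influence-maximization literature, so no ambiguity is introduced by labelling it ``Fact'' rather than ``Definition.''

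If the statement is interpreted as an implicit claim of equivalence with the alternative union/intersection form $f(A) + f(B) \geq f(A \cup B) + f(A \cap B)$, then the only substantive work is the routine two-way implication. The forward direction would proceed by enumerating $B \setminus A$ as $v_1, \dots, v_t$, applying the diminishing-returns inequality to each nested pair $(A \cap B) \cup \{v_1,\dots,v_{i-1}\} \subseteq A \cup \{v_1,\dots,v_{i-1}\}$, and telescoping. The reverse direction would follow by specialising $A := S \cup \{v\}$ and $B := T$. The closest thing to an obstacle is the bookkeeping required to keep the two telescoping chains aligned, but this is mechanical and introduces no real difficulty; since the paper uses the fact only as a named reference, I do not anticipate that even this much needs to be included in the final write-up.
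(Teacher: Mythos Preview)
Your reading is correct: the paper treats Fact~\ref{fact:sub} purely as a definition and offers no proof for it, so there is nothing to derive. Your observation that it is a naming convention matches the paper's own usage exactly; the extra material on the equivalent union/intersection formulation is fine but unnecessary here.
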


\begin{restatable}{lemma}{lemmagissub}\label{lem:gis_sub}
The inner influence spread $\sigma_c: S\rightarrow \R$ is submodular under the Independent Cascade (IC) model.
\end{restatable}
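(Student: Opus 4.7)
The plan is to mirror the classical submodularity proof of Kempe, Kleinberg, and Tardos, but restrict the counted activations to the target group $c$. The live-edge graph decomposition already given in the preliminaries,
\begin{equation*}
\sigma_c(S) \;=\; \sum_{\cL \in \Omega(\cL, G)} \Pr\{\cL \mid G\} \cdot |\Gamma(S, \cL) \cap V_c|,
\end{equation*}
reduces the claim to showing that, for every fixed live-edge graph $\cL$, the set function $f_{\cL,c}(S) := |\Gamma(S, \cL) \cap V_c|$ is submodular, since a non-negative convex combination of submodular functions is again submodular.

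To establish submodularity of $f_{\cL,c}$, I would express it as a coverage function. First observe that in a deterministic directed graph $\cL$, reachability from a set is the union of reachabilities from the singletons, i.e.\ $\Gamma(S, \cL) = \bigcup_{s \in S} \Gamma(\{s\}, \cL)$. Intersecting with $V_c$ distributes over the union, giving
\begin{equation*}
f_{\cL,c}(S) \;=\; \Bigl|\bigcup_{s \in S} \bigl(\Gamma(\{s\}, \cL) \cap V_c\bigr)\Bigr|.
\end{equation*}
Setting $R_s := \Gamma(\{s\}, \cL) \cap V_c$, this is exactly the coverage function $S \mapsto |\bigcup_{s \in S} R_s|$, which is a textbook example of a submodular function: for $S \subseteq T$ and $v \notin T$, the marginal $|R_v \setminus \bigcup_{s \in S} R_s|$ is at least $|R_v \setminus \bigcup_{s \in T} R_s|$ because the subtracted set only grows.

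Finally, I would combine the per-live-graph submodularity with the live-edge decomposition: for any $S \subseteq T \subseteq V$ and $v \notin T$,
\begin{equation*}
\sigma_c(S \cup \{v\}) - \sigma_c(S) \;=\; \sum_{\cL} \Pr\{\cL \mid G\} \bigl[f_{\cL,c}(S \cup \{v\}) - f_{\cL,c}(S)\bigr] \;\ge\; \sum_{\cL} \Pr\{\cL \mid G\} \bigl[f_{\cL,c}(T \cup \{v\}) - f_{\cL,c}(T)\bigr] \;=\; \sigma_c(T \cup \{v\}) - \sigma_c(T),
\end{equation*}
which is the required diminishing-returns inequality. I do not anticipate a real obstacle here; the only subtlety worth stating carefully is the distributivity step $\Gamma(S,\cL) \cap V_c = \bigcup_{s \in S}(\Gamma(\{s\},\cL) \cap V_c)$, since everything else is a direct invocation of coverage-function submodularity and closure of submodular functions under non-negative linear combinations.
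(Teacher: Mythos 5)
Your proposal is correct and follows essentially the same route as the paper's proof: both fix a live-edge graph $\cL$, establish submodularity of the per-realization function $S \mapsto |\Gamma(S,\cL)\cap V_c|$ using the fact that reachability from a seed set is the union of reachabilities from its singletons, and then take the expectation over $\cL$. The only cosmetic difference is that you phrase the per-realization step as coverage-function submodularity, whereas the paper writes out the marginal gain directly via the intersection identity $|\Gamma_c(S\cup\{v\},\cL)| - |\Gamma_c(S,\cL)| = |\Gamma_c(\{v\},\cL)| - |\Gamma_c(\{v\},\cL)\cap \Gamma_c(S,\cL)|$.
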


\begin{proof}
For a graph $G=(V,E)$ and its associated probability matrix $P$, let $\cL$ be a randomly sampled live-edge graph under the IC model and $\Gamma_c(S,\cL)$ denote the set of nodes in group $c$ that are reachable from $S$ within $\cL$.
Then, for any $S\subseteq T \subseteq V$, it holds $\Gamma_c(S,\cL) \subseteq \Gamma_c(T,\cL)$.
Let $v\in V, v\notin T$ be a random node, then it holds
\begin{align*}
    \Gamma_c(\{v\},\cL)\cap \Gamma_c(S,\cL) \subseteq \Gamma_c(\{v\},\cL)\cap \Gamma_c(T,\cL).
\end{align*}

Therefore, it establishes
\begin{align*}
    & |\Gamma_c(S\cup\{v\},\cL)| - |\Gamma_c(S,\cL)| \\
    = & |\Gamma_c(\{v\},\cL)| - |\Gamma_c(\{v\},\cL)\cap \Gamma_c(S,\cL)| \\
    \geq & |\Gamma_c(\{v\},\cL)| - |\Gamma_c(\{v\},\cL)\cap \Gamma_c(T,\cL)| \\
    = & |\Gamma_c(T\cup\{v\},\cL)| - |\Gamma_c(T,\cL)|.
\end{align*}

Recall that $\sigma(S)=\sum_{\cL\in \Omega(\cL,G)} \Pr\{\cL|G\}\cdot|\Gamma(S, \cL)|$, we have $\sigma_c(S)=\sum_{\cL\in \Omega(\cL,G)} \Pr\{\cL|G\}\cdot|\Gamma_c(S, \cL)|$, leading to
\begin{align*}
    & \sigma_c(S\cup \{v\})-\sigma_c(S) \\
    = & \sum_{\cL\in \Omega(\cL,G)} \Pr\{\cL|G\}\cdot \bigg( |\Gamma_c(S\cup\{v\}, \cL)| - |\Gamma_c(S, \cL)|\bigg) \\
    \geq & \sum_{\cL\in \Omega(\cL,G)} \Pr\{\cL|G\}\cdot \bigg( |\Gamma_c(T\cup\{v\}, \cL)| - |\Gamma_c(T, \cL)|\bigg) \\
    = & \sigma_c(T\cup \{v\})-\sigma_c(T),
\end{align*}
which concludes the proof.
\end{proof}

In addition, $\sigma_c$ is obviously non-negative and monotonic according to its definition.
Therefore, by applying a greedy approach that iteratively adds the node bringing the maximum gain {\em w.r.t} $\sigma_c$ to $S$, it will hold $\sigma_c(S)\ge(1-1/e)\cdot OPT$ \citep{kempe} with $OPT$ denoting the optimal result.

Thus, our proposed Inner-group Maximization (IGM) applies IMM~\citep{imm} within each group $c\in\cC$ to maximize their inner influence spread $\sigma_c(S^c)$ and obtain its correspoding seed set, denoted as $S^c = \{s^c_1, s^c_2, \cdots, s^c_k \}$.
Note that $s_i^c$ may lie outside group $c$.
Due to the fact that calculating $\sigma_c$ is \#P-hard~\citep{kempe}, IMM estimates $\sigma_c$ based on the Reverse Influence Sampling (RIS)~\citep{ris1,ris2} approach.
By generating $\theta_c=\frac{2n\cdot \left( (1-1/e)\cdot \sqrt{\ell \ln |V_c|+\ln4} +\sqrt{(1-1/e)\cdot \left( \ln\binom{|V_c|}{k}+\ell\ln{|V_c|}+\ln4 \right)}\right)^2}{OPT_c\cdot\varepsilon^2}$ Reverse Reachable (RR) sets~\citep{theta}, where $\varepsilon$ (usually 0.1) and $\ell$ (usually 1) are two manual parameters, the seed set $S^c$ identified via IMM guarantees the approximation of $\sigma_c(S^c)\geq(1-1/e-\varepsilon)\cdot OPT_c$ with a probability at least $1-1/|V_c|^\ell$.
Let $S^{*c}$ denote the optimal seed set which leads to $\sigma(S^{*c})=OPT_c$. 
It is essential to note that the seeds in $S^{*c}$ do not necessarily have to be inside $c$, which means it is a global solution.
The same formulation also holds for $S^c$.

In general, IGM outputs a set of inner-group seed sets $\cS=\{S^{c_1}, S^{c_2}, \cdots, S^{c_m} \}$, where each $S^c$ ($|S^c|=k$) corresponds to a group $c\in \cC$ and is produced using the IMM algorithm.
We denote the $i$-th seed node in $S^c$ as $s_i^c$, $i \in [k]$, as follows:
\begin{equation}\label{eq:igm_seeds}
\begin{split}
    & S^{c_1}:\;\ s^{c_1}_1,\ \ s^{c_1}_2,\ \ s^{c_1}_3,\ \ \cdots,\ \ s^{c_1}_k \\
    & S^{c_2}:\;\ s^{c_2}_1,\ \ s^{c_2}_2,\ \ s^{c_2}_3,\ \ \cdots,\ \ s^{c_2}_k \\
    & \cdots \\
    & S^{c_m}:\; s^{c_m}_1,\ s^{c_m}_2,\ s^{c_m}_3,\ \cdots,\ s^{c_m}_k
\end{split}
\end{equation}

Note that a node can appear in multiple lines, but no duplicates can exist within the same line.
In addition, when performing IMM in each group, we store those generated RR sets, which can be used to calculate $\ut_c(A), \forall c\in\cC, A\subset V$ with high efficiency in the next step. 

\subsection{Across-group Maximization (AGM)}
In this step, we propose two strategies for selecting the final seed set $S$ aimed at maximizing $\Phi(S)$: AGM with Uniform Selection (AGM-US), and AGM with Greedy Selection (AGM-GS).
Both strategies build upon the results produced by IGM, but differ in how they coordinate seed selection based on the outputs shown in Eq.~(\ref{eq:igm_seeds}).

Fundamentally, AGM-US leverages the inner-group submodularity to provide a general approximation bound applicable to arbitrary group structures.
In contrast, AGM-GS follows an intuitively greedy manner, though not bounded in most cases, it usually outperforms AGM-US in practice.

\subsubsection{Uniform Selection}
For AGM-US, seed nodes are selected in a column-wise manner.
Specifically, it selects $s^c_i$ only after $s^c_{i-1}$ has been selected as a seed for every $c\in\cC$. 
The complete procedure of AGM-US is outlined in Algorithm~\ref{alg:AGM-US}.

\begin{algorithm}[!h]
\small
\caption{AGM-US} 
\label{alg:AGM-US}
\begin{algorithmic}[1]
\Require 
Network $G$, group structure $\cC$, inner-group seed sets $\cS$, budget $k$, generated RR sets $\mathcal{R}$.
\Ensure 
MMF seed set $S$ with size $k$.
\State $S = \varnothing$
\State $k' =1$
\While {$|S|< k$}
    \State $S' = \bigcup_{c\in\cC} \{s^c_{k'}\} \setminus S $ 
    \If {$|S \cup S'|\leq k$}
        \State $S = S \bigcup S'$
        \State $k' = k'+1$
    \Else
        \For {$s\in S'$}
            \State $L(s)=\Phi(S\cup\{s \})$
        \EndFor
        \State $S = S\bigcup \{argmax_{s\in S'}L(s) \}$
    \EndIf
\EndWhile
\State Return $S$
\end{algorithmic}
\end{algorithm}

The idea of AGM-US is conceptually straightforward. 
It selects seeds from Eq.~(\ref{eq:igm_seeds}) column by column until it meets the budget.
The only exception occurs when it reaches the column (marked as the final column $k'$) where the remaining budget $k-|S|$ is less than the number of nodes not included in $S$ in that column.
In this case, AGM-US iteratively selects a node in the final column that could bring the maximum value to the objective $\Phi(S)$, until it meets the budget.
Note that AGM-US may often encounter duplicated nodes across different $S^c$, which leads to the final column usually exceeding $k/m$, where $m$ denotes the number of groups.
 
Let $k_c+q_c$ denote the number of nodes in $S$ that appear in $S^c$, where $k_c$ refers to the prefix of $S^c$ (thus $k_c\leq k'$) and $q_c$ ($q_c > k'$) denotes the remaining nodes.
Then, it is obvious that $|k_{c_i}-k_{c_j}|\leq 1, \forall c_i,c_j\in\cC, i\neq j$.
The following Theorem~\ref{thm:agmus} presents the approximation ratio of AGM-US with respect to the optimal solution.

\begin{restatable}{theorem}{theoremus}\label{thm:agmus}
    Let $S^*$ denote the optimal solution to Maximin Fairness, the seed set $S$ returned by AGM-US (Algorithm~\ref{alg:AGM-US}) satisfies $\Phi(S)\geq (\frac{1}{m}-\xi)(1-1/e-\varepsilon)\Phi(S^*)$ where $m$ is the number of communities and $\xi=\frac{mod(k,m)}{km}$.
\end{restatable}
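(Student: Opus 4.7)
The plan is to lower-bound the utility $\ut_c(S)$ of every group $c\in\cC$ separately, and then take the minimum to recover $\Phi(S)$. The argument combines three already-available ingredients: monotonicity and submodularity of the inner spread $\sigma_c$ (Lemma~\ref{lem:gis_sub}), the IMM guarantee $\sigma_c(S^c)\ge(1-1/e-\varepsilon)\,\sigma_c(S^{*c})$ where $S^{*c}$ maximizes $\sigma_c$ at budget $k$, and a classical averaging property of greedy sequences on monotone submodular functions.

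Fix an arbitrary group $c$ and let $S^c_{k_c}$ denote the longest prefix of $S^c$ that sits inside $S$. Because Algorithm~\ref{alg:AGM-US} advances to column $k'+1$ only after every group's $k'$-th seed has been placed in $S$, the prefix lengths across groups differ by at most $1$ and in the worst case satisfy $k_c\ge\lfloor k/m\rfloor$. Writing $r=k\bmod m$, this immediately gives
\begin{equation*}
\tfrac{k_c}{k}\;\ge\;\tfrac{\lfloor k/m\rfloor}{k}\;=\;\tfrac{1}{m}-\tfrac{r}{km}\;=\;\tfrac{1}{m}-\xi.
\end{equation*}
Next I would use monotonicity of $\sigma_c$ to write $\sigma_c(S)\ge\sigma_c(S^c_{k_c})$, and invoke the greedy averaging inequality: since the marginals $\Delta_j=\sigma_c(S^c_j)-\sigma_c(S^c_{j-1})$ are non-increasing in $j$ (a one-line consequence of submodularity and the greedy selection rule, using $\Delta_j\ge\sigma_c(S^c_{j-1}\cup\{v_{j+1}\})-\sigma_c(S^c_{j-1})\ge\sigma_c(S^c_j\cup\{v_{j+1}\})-\sigma_c(S^c_j)=\Delta_{j+1}$), the average of the first $k_c$ marginals dominates the overall average, and hence $\sigma_c(S^c_{k_c})\ge(k_c/k)\,\sigma_c(S^c)$.

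Combining this averaging bound with the IMM guarantee and the fact that $\sigma_c(S^{*c})\ge\sigma_c(S^*)$ (because the maximin optimum $S^*$ is itself a feasible budget-$k$ set for the single-group problem), dividing by $|V_c|$, and using $\ut_c(S^*)\ge\Phi(S^*)$, I obtain
\begin{equation*}
\ut_c(S)\;\ge\;\tfrac{k_c}{k}\,(1-1/e-\varepsilon)\,\ut_c(S^*)\;\ge\;\bigl(\tfrac{1}{m}-\xi\bigr)(1-1/e-\varepsilon)\,\Phi(S^*).
\end{equation*}
Taking the minimum over $c\in\cC$ on the left-hand side then closes the proof.

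The main obstacle I anticipate lies in the averaging step: the non-increasing marginal-gains property strictly applies only to the function that the greedy algorithm actually optimizes, and IMM in fact runs greedy on the RR-set coverage estimator $\hat{\sigma}_c$ rather than on $\sigma_c$ itself. So the clean averaging inequality $\hat{\sigma}_c(S^c_{k_c})\ge(k_c/k)\hat{\sigma}_c(S^c)$ holds for the estimator, and transferring it back to $\sigma_c$ will require the standard IMM concentration machinery (Chernoff-style bounds on the $\theta_c$ sampled RR sets). I expect the resulting slack to be absorbable into the existing $\varepsilon$ factor, but this is the step that needs genuine care. A secondary bookkeeping issue is certifying $k_c\ge\lfloor k/m\rfloor$ in the presence of duplicate seeds across different $S^c$'s and of the tie-breaking greedy in the final column, but the column-wise advancement rule in Algorithm~\ref{alg:AGM-US} makes this essentially immediate.
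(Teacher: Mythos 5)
Your proposal is correct and follows essentially the same route as the paper's proof: a per-group lower bound via the prefix length $k_c\ge\lfloor k/m\rfloor$, the averaging inequality $\sigma_c(S^c_{k_c})\ge(k_c/k)\,\sigma_c(S^c_k)$ from non-increasing greedy marginals, the IMM guarantee, the comparison $\sigma_c(S^{*c})\ge\sigma_c(S^*)$, and finally passing to the worst-off group. The only differences are cosmetic (you apply the averaging step to the greedy prefix before invoking IMM at budget $k$, whereas the paper invokes IMM at budget $k_c$ and then averages on the optimum), and you are in fact slightly more careful than the paper in observing that non-increasing marginals need the greedy selection rule in addition to submodularity, and that IMM's greedy order is taken with respect to the RR-set estimator rather than $\sigma_c$ itself.
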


\begin{proof}

    

    For any group $c$, let $S^c_r = \{s^c_1, s^c_2, \cdots, s^c_r \}$ ($r\leq k$) denoting the first $r$ seeds in $S^c$, and let $\Delta_c(s^c_i) = \sigma_c(S^c_i) - \sigma_c(S^c_{i-1})$ with $s^c_0 = \varnothing$. 
    By submodularity of $\sigma_c$ proved in Lemma~\ref{lem:gis_sub}, we have 
    \begin{equation*}
        \Delta_c(s^c_1) \geq \Delta_c(s^c_2) \geq \cdots \geq \Delta_c(s^c_k),
    \end{equation*}

    Therefore, for any $\sigma_c(S^c_r)$, $1\leq r \leq k$, it holds
    \begin{equation}\label{eq:sigmack}
    \begin{split}
        \sigma_c(S_r^c) 
        & = \Delta_c(s_1^c)+\Delta_c(s_2^c) + \cdots + \Delta_c(s_r^c) \\
        & = r \cdot \frac{1}{r}\sum_{i=1}^{r} \Delta_c(s_i^c) \\
        & \geq r \cdot\frac{1}{k}\sum_{i=1}^{k} \Delta_c(s_i^c) \\
        & = \frac{r}{k}\sigma_c(S_k^c).
    \end{split}
    \end{equation}

    Recall that the number of nodes selected in $S^c$ is $k_c + q_c$, which makes $k_c\geq \lfloor k/m \rfloor = (1/m - \xi)\cdot k$, where $\xi=\frac{mod(k,m)}{km}$.
    Then we have
    \begin{equation*}
    \begin{split}
        \sigma_c(S)  \geq \sigma_c(S^c_{k_c})
        & \geq (1-1/e-\varepsilon)\sigma_c(S^{*c}_{k_c})\ \ \text{(Given by IMM)}\\
        & \geq \frac{k_c}{k} (1-1/e-\varepsilon)\sigma_c(S^*)\ \ \text{(Based on Eq.~(\ref{eq:sigmack})}) \\
        & \geq (\frac{1}{m} - \xi)(1-1/e-\varepsilon)\sigma_c(S^*).
    \end{split}
    \end{equation*}

    Let $c^\#$ denote the group with the worst-off utility under seed set $S$ returned by AGM-US (Algorithm~\ref{alg:AGM-US}), such that $\Phi(S)=\ut_{c^\#}(S)$. 
    Similarly, let $c^*$ denote the group with the worst-off utility under the optimal seed set $S^*$, where $\Phi(S^*)=\ut_{c^*}(S^*)$.
    It finally holds
    \begin{equation*}
    \begin{split}
        \Phi(S) = \ut_{c^\#}(S) & \geq (\frac{1}{m} - \xi)(1-1/e-\varepsilon)\ut_{c^\#}(S^*) \\
        & \geq (\frac{1}{m} - \xi)(1-1/e-\varepsilon)\ut_{c^*}(S^*) \\
        & \geq (\frac{1}{m} - \xi)(1-1/e-\varepsilon)\Phi(S^*).
    \end{split}
    \end{equation*}

    Thus, we conclude the proof.
\end{proof}

As mentioned above, a node may exist duplicated across different $S^c$. 
Therefore, in practice, $k'$ is usually much larger than $\lfloor k/m \rfloor$, which leads to
\begin{equation*}
    \Phi(S) \geq \frac{k'}{k}(1-1/e-\varepsilon)\Phi(S^*),
\end{equation*}
providing a tighter and empirical bound.

\subsubsection{Greedy Selection}
For AGM-GS, seed nodes are selected in a greedy manner by comparing the current prefix node in each group to select the node that yields the maximum $\Phi$.
The prefix node of a group $c$ is initially set to its first seed, namely, $s^c_1$.
Once a seed is selected and added to $S$, each group updates its prefix node by advancing to the next node in its respective seed list until it finds a node not already included in $S$.
This update is necessary because the selected seed may appear duplicated in $S^c$ for different $c \in \cC$. 
Our update strategy ensures that the prefix node of each group is always the first node in its original order that remains available for selection, while still aiming to maximize its inner-group influence spread.
The detailed procedure of AGM-GS is depicted in Algorithm~\ref{alg:AGM-GS}.

\begin{algorithm}[!h]
\small
\caption{AGM-GS} 
\label{alg:AGM-GS}
\begin{algorithmic}[1]
\Require 
Network $G$, group structure $\cC$, inner-group seed sets $\cS$, budget $k$, generated RR sets $\mathcal{R}$.
\Ensure 
MMF seed set $S$ with size $k$.

\State $S = \varnothing$
\State set prefix indicator $k_c=1$, for all $c\in \cC$
\While {$|S|\leq k$}
    \State $S' = \bigcup_{c\in\cC} s^c_{k_c}$
    \State // Greedy selection:
    \For {$s\in S'$}
        \State $L(s)=\Phi(S\cup\{s\})$
    \EndFor 
    \State $S = S\bigcup \{argmax_{s\in S'}L(s) \}$
    \State // Indicator update:
    \For{$c\in \cC$}
        \While{$s^c_{k_c} \in S$}
            \State $k_c = k_c+1$
        \EndWhile
    \EndFor
\EndWhile
\State Return $S$
\end{algorithmic}
\end{algorithm}

The core idea of AGM-GS is intuitive: it continuously compares the current prefix nodes from each $S^c$ and greedily selects the one that gives the maximum $\Phi$. 
During each update, the prefix indicator of each group advances through its seed list until it meets a node that is not selected yet.
Unfortunately, though AGM-GS often outperforms AGM-US in practice, it does not offer any approximation guarantee to the optimal solution in general graph settings. 
However, if the groups of $\cC$ are completely disconnected in the graph $G$, {\em i.e.}, $\rho(G,\cC)=0$, then the seed set $S$ returned by AGM-GS satisfies the approximation bound $\Phi(S) \geq (1 - 1/e - \varepsilon)\Phi(S^*)$, as given by the following Theorem~\ref{thm:agmgs}.

\begin{restatable}{theorem}{theoremgs}\label{thm:agmgs}
    Let $S^*$ denote the optimal solution to Maximin Fairness, the seed set $S$ returned by AGM-GS (Algorithm~\ref{alg:AGM-GS}) satisfies $\Phi(S)\geq(1-1/e-\varepsilon)\Phi(S^*)$.
\end{restatable}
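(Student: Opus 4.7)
My plan is to exploit the decoupling that occurs under disconnected groups. When $\rho(G,\cC)=0$, every sampled live-edge graph propagates influence only within a single group, so $\sigma_c(S)=\sigma_c(S\cap V_c)$ and $\ut_c(S)$ depends only on $S\cap V_c$. In particular, running IMM inside group $c$ only assigns positive marginal value to nodes in $V_c$, so the lists $S^{c_1},\ldots,S^{c_m}$ are pairwise disjoint, AGM-GS never encounters duplicates, each prefix indicator $k_c$ advances by exactly one whenever a node from $S^c$ is selected, and the final output has the clean form $S=\bigcup_{c\in\cC}\{s^c_1,\ldots,s^c_{k_c}\}$ with $\sum_c k_c=k$.

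Next I reinterpret AGM-GS as greedy water-filling over the inner-group utility curves $U^{(c)}(j):=\ut_c(S^c_j)$. By Lemma~\ref{lem:gis_sub}, each $U^{(c)}$ is non-decreasing with non-increasing marginal gains, i.e.\ discretely concave. Because adding $s^c_{k_c+1}$ only changes $\ut_c$, a short case analysis of AGM-GS's $\arg\max$ rule shows that when a single group uniquely attains the current minimum of $\Phi$, adding to that group strictly raises $\Phi$ while any other addition leaves it unchanged; when several groups are tied at the minimum, every consistent tie-break leaves $\Phi$ at the current minimum. Hence the trajectory of AGM-GS coincides with that of greedy water-filling over the curves $U^{(c)}$.

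The crux is then a water-filling optimality lemma: for any allocation $(k'_c)$ with $\sum_c k'_c=k$, $\min_c U^{(c)}(k_c)\ge\min_c U^{(c)}(k'_c)$. I prove this by exchange. Suppose the alternative allocation achieves $\mu':=\min_c U^{(c)}(k'_c)$ strictly larger than $\mu:=\min_c U^{(c)}(k_c)$. Every AGM-GS bottleneck $c$ satisfies $U^{(c)}(k_c)=\mu<\mu'\le U^{(c)}(k'_c)$, hence $k'_c>k_c$ by monotonicity; the sum constraint forces some non-bottleneck $c_0$ with $k'_{c_0}<k_{c_0}$. Tracking the last iteration $t_0$ at which AGM-GS added to $c_0$, either $c_0$ was (tied for) the current minimum at $t_0-1$, in which case $U^{(c_0)}(k_{c_0}-1)\le\mu$ and monotonicity yields $U^{(c_0)}(k'_{c_0})\le\mu<\mu'$, contradicting $\mu'\le U^{(c_0)}(k'_{c_0})$; or $c_0$ was strictly above the current minimum, in which case AGM-GS's $\arg\max$ rule forces the actual minimum group at $t_0-1$ to be saturated at a value $\le\mu$, so its utility under any allocation $k'$ is also $\le\mu$, again contradicting $\mu'>\mu$.

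Finally I chain with the IMM approximation. IMM's RIS-based greedy produces $(1-1/e-\varepsilon)$-approximate prefixes: for each $c$ and $j\le k$, $U^{(c)}(j)\ge(1-1/e-\varepsilon)V^{(c)}(j)$ where $V^{(c)}(j):=\max_{A\subseteq V_c,|A|=j}\ut_c(A)$. Combined with $\ut_c(S^*)=\ut_c(S^*\cap V_c)\le V^{(c)}(k^*_c)$ for $k^*_c:=|S^*\cap V_c|$, the desired bound follows:
\[
\Phi(S)=\min_c U^{(c)}(k_c)\ \ge\ \min_c U^{(c)}(k^*_c)\ \ge\ (1-1/e-\varepsilon)\min_c V^{(c)}(k^*_c)\ \ge\ (1-1/e-\varepsilon)\Phi(S^*).
\]
I expect the water-filling exchange argument to be the main obstacle, since ties in AGM-GS's $\arg\max$ must be handled consistently to rule out pathological tie-breaking; a secondary technical point is invoking IMM's $(1-1/e-\varepsilon)$ guarantee prefix-by-prefix rather than only at $j=k$, which follows from the standard greedy-prefix analysis of submodular maximization together with the uniform concentration afforded by the $\theta_c$ RR samples.
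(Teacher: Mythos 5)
Your proposal follows essentially the same route as the paper's proof: under $\rho(G,\cC)=0$ the problem decouples into per-group prefix allocations along the IGM lists, and the key step is an exchange argument pitting the greedy allocation $(k_c)$ against the allocation $(|S^*\cap V_c|)$ induced by the optimum, chained with the prefix-wise IMM guarantee --- your water-filling lemma is a mild generalization of the paper's two-case comparison of $d_{c^\#}$ versus $b_{c^\#}$. The two caveats you flag are real but are equally present in the paper's own argument: both proofs silently assume favorable tie-breaking in the $\arg\max$ (with adversarial ties, e.g.\ at the outset when every group has utility $0$ and no single addition changes $\Phi$, the rule is uninformative and your case~(b) as well as the paper's ``case (2) can only occur when'' step do not fully close), and both invoke IMM's $(1-1/e-\varepsilon)$ guarantee for every prefix length $b_c\le k$ rather than only at $k$.
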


\begin{proof}
    Assume the number of seeds selected in each group $c$ from $S^*$ is $b_c$, {\em i.e.}, $|S^*\cap V_c|=b_c$, and the seed sets output by IGM is $\cS=\{S^{c_1}, S^{c_2}, \cdots, S^{c_m} \}$.
    Let $S^c_r = \{s^c_1, s^c_2, \cdots, s^c_r \}$ ($r\leq k$) denoting the first $r$ seeds in $S^c$ for each group $c$, $c\in\cC$, then we have
    \begin{equation}\label{eq:optimal-global2inner}
        \Phi(S^c_{b_c}) \geq (1-1/e-\varepsilon) \Phi(S^{*c}_{b_c})  = (1-1/e-\varepsilon) \Phi(S^*_{b_c}).
    \end{equation}
    
    Eq.~(\ref{eq:optimal-global2inner}) indicates that, when groups are completely disconnected, each group $c$ can only be influenced by its own seed set $S^c\subseteq V_c$.
    Consequently, selecting exactly the first $b_c$ nodes in $S^c$ as seeds for each $c \in \mathcal{C}$ yields a $(1 - 1/e - \varepsilon)$ approximation guarantee. 
    However, the actual number of nodes selected from $S^c$ by AGM-GS (Algorithm~\ref{thm:agmgs}), denoted as $d_c$, may not equal $b_c$ for all $c\in \cC$.
    
    Let $c^\#$ denote the group with the worst-off utility under seed set $S$ returned by AGM-GS (Algorithm~\ref{thm:agmgs}), such that $\Phi(S)=\ut_{c^\#}(S)$. 
    Similarly, let $c^*$ denote the group with the worst-off utility under the optimal seed set $S^*$, where $\Phi(S^*)=\ut_{c^*}(S^*)$. 
    Naturally, we consider two cases based on the number of selected nodes in $S^{c^\#}$: (1) $d_{c^\#} \geq b_{c^\#}$, and (2) $d_{c^\#} < b_{c^\#}$.
    




    (1) If $d_{c^\#}\geq b_{c^\#}$, we have
    \begin{equation*}
    \begin{split}
        \Phi(S) & = \ut_{c^\#}(S) = \ut_{c^\#}(S^{c^\#}_{d_{c^\#}}) \\
        & \geq \ut_{c^\#}(S^{c^\#}_{b_{c^\#}}) \geq (1-1/e-\varepsilon) \ut_{c^\#}(S^{*c^\#}_{b_{c^\#}}) \\
        & \geq (1-1/e-\varepsilon) \ut_{c^*}(S^{*c^*}_{b_{c^*}}) \\
        & = (1-1/e-\varepsilon) \Phi(S^*).
    \end{split}
    \end{equation*}

    (2) If $d_{c^\#}< b_{c^\#}$, since $\sum_{c\in\cC}d_c = \sum_{c\in\cC}b_c = k$, there will exist a group $c'$ which makes $d_{c'}>b_{c'}$, which leads to
    \begin{equation*}
    \begin{split}
        \ut_{c'}(S^{c'}_{d_{c'}-1}) 
        & \geq \ut_{c'}(S^{c'}_{b_{c'}}) \\
        & \geq (1-1/e-\varepsilon)\ut_{c'}(S^{*c'}_{b_{c'}}) \\
        & \geq (1-1/e-\varepsilon)\ut_{c^*}(S^{*c^*}_{b_{c^*}}) \\
        & = (1-1/e-\varepsilon)\Phi(S).
    \end{split}
    \end{equation*}

    Recall that Algorithm~\ref{alg:AGM-GS} applies a greedy strategy to coordinate seed selection.
    Thus, case (2) can only occur when     
    \begin{equation*}
    \begin{split}
        \Phi(S)  = \ut_{c^\#}(S) & =\ut_{c^\#}(S^{c^\#}_{d_{c^\#}})\\
        & \geq \ut_{c'}(S^{c'}_{d_{c'}-1}) \\
        & \geq (1-1/e-\varepsilon)\ut_{c'}(S^{*c'}_{b_{c'}}) \\
        & \geq (1-1/e-\varepsilon) \ut_{c^*}(S^{*c^*}_{b_{c^*}}) \\
        & \geq (1-1/e-\varepsilon)\Phi(S^*).
    \end{split}
    \end{equation*}
    
    Therefore, $\Phi(S) \geq (1-1/e-\varepsilon)\Phi(S^*)$ always holds for both cases, which concludes the proof.
\end{proof}

It should be noted that our proposed framework is most effective when the number of communities is small, which is our intended setting and also where fairness (e.g., gender or ethnicity) is most meaningful.
Besides, it also requires $k > m$, as this avoids the regime where the problem becomes inapproximable~\citep{krause2008robust}.

\subsection{Discussion on AGM and $\rho$}

The reason the maximin objective $\Phi(S)$ does not satisfy submodularity is that a group can also be influenced by nodes belonging to other groups. 
Recall that $\rho(G, \cC)$ measures the connectivity strength between groups, with $\rho \to 0$ indicating weaker connectivity and $\rho \to 1$ stronger connectivity.
When $\rho \to 0$, the likelihood that a group $c$ is influenced by nodes from other groups becomes negligible. 
In this case, AGM-GS is expected to perform well, approaching the approximation bound established in Theorem~\ref{thm:agmgs}.
Conversely, when $\rho \rightarrow 1$, across-group influence becomes more prevalent. 
This could result in significant overlaps among the seed sets $S^c \in \mathcal{S}$ for different $c\in \cC$, making $k'>>\lfloor k/m\rfloor$, which in turn enhances the approximation bound of AGM-US in Theorem~\ref{thm:agmus}.


\section{Experiments}

\subsection{Datasets}

We conduct experiments on seven real-world networks with ground-truth group structures.

(1) The AVC network~\citep{avc} models an obesity prevention intervention in the Antelope Valley region of California.
Each node in the network represents a real person associated with three demographic information.
In our experiment, we chose the gender feature to divide the network into Male (51\%) and Female (49\%).
(2) The labeled ENZ network~\citep{nr} models the relations between enzymes, and (3) the labeled PRO network~\citep{nr} models the relations between proteins.
Both the ENZ network and the PRO network are divided into three groups.
The group distributions for ENZ are 48.3\%, 49.4\%, and 2.3\%, and for PRO are 48.7\%, 48.1\%, and 3.2\%.
(4) The UVM network~\citep{uvmucsc} was built from the Facebook data in UVM (University of Vermont), where every node is a member of UVM. 
The network was preprocessed by~\cite{attributeRIS}, who remove the nodes without a user profile. 
It is divided by the Role feature into Faculty (12\%) and Student (88\%), or divided by the Grade feature into Senior (40\%) and Junior (60\%).
(5) Similarly, the UCSC dataset~\citep{uvmucsc} was built from the Facebook social network data in UCSC (University of California at Santa Cruz), where every node is a member of UCSC.
The network is either divided by the role feature into Faculty (10\%) and Student (90\%) or divided by the Gender feature into Male (45\%) and Female (55\%).
(6) The GOW network~\citep{gow} was collected from the public API of Gowalla, a location-based social networking service in which users share their locations via check-ins. 
Nodes are classified based on their check-in records into check-in users (54.5\%) and non-check-in users (45.5\%).
(7) The TWI network~\citep{twi} network was collected from the public API, where nodes represent Twitch users and edges denote mutual follower relationships. 
We partition the network based on the age feature into mature users (47.0\%) and immature users (53.0\%). 

Detailed statistics of those networks are given in Table~\ref{tab:dataset}.

\begin{table}[!h]
    \caption{Datasets.}
    \vspace{-5pt}
    \centering
    \small
    \begin{tabular}{rrrccc}
    \toprule
    Network & Nodes & Edges & $\rho$ & $m$ & Division \\
    \midrule
    AVC & 0.5k & 1.7k & 0.1889 & 2 & Gender \\
    ENZ & 19.6k & 74.6k & 0.3347 & 3 & Category \\
    PRO & 43.5k & 162.1k & 0.3432 & 3 & Category \\
    \midrule
    \multirow{2}{*}{UVM} & \multirow{2}{*}{7.3k} & \multirow{2}{*}{191.2k} & 0.1206 & 2 & Role \\
     &  &  & 0.2476 & 2 & Grade \\
    \midrule
    \multirow{2}{*}{UCSC} & \multirow{2}{*}{8.9k} & \multirow{2}{*}{224.5k} & 0.1093 & 2 & Role \\
    &  &  & 0.5025 & 2 & Gender \\
    \midrule
    GOW & 196.6k & 950.3k & 0.4187 & 2 & Check-info \\
    TWI & 168.1k & 6797.6k & 0.4547 & 2 & Age \\    
    \bottomrule
    \end{tabular}
    \label{tab:dataset}
\end{table}

\subsection{Baselines}

In experiments, we compare our proposed AGM-US and AGM-GS with 
(1) IMM~\citep{imm}: a traditional IM method that focuses on maximizing the influence spread without considering fairness; 
(2) Greedy~\citep{gaps}: a naive greedy approach that iteratively picks the node providing the maximum gain in the maximin objective;
and (3) LP-Greedy~\citep{lp-greedy}: an LP-based method that formulates the problem as multiobjective submodular maximization and solves it with Gurobi optimizer~\citep{gurobi}, has been shown to outperform previous approaches~\citep{multiobj,groupfair}.

It is worth noting that the primary time cost of our method arises from IMM. 
Our additional steps are lightweight: AGM-US runs in $O(k)$, while AGM-GS runs in $O(mk\theta)$.

\subsection{Maximin Objective}

\paragraph*{AVC.} 
We first evaluate our algorithm on the AVC network, which is partitioned by the gender attribute into a balanced group structure, with groups comprising $51\%$ and $49\%$ of the nodes, respectively.
Results are shown in Figure~\ref{fig:avc}.

\begin{figure}[!h]
    \centering
    \includegraphics[width=0.6\linewidth]{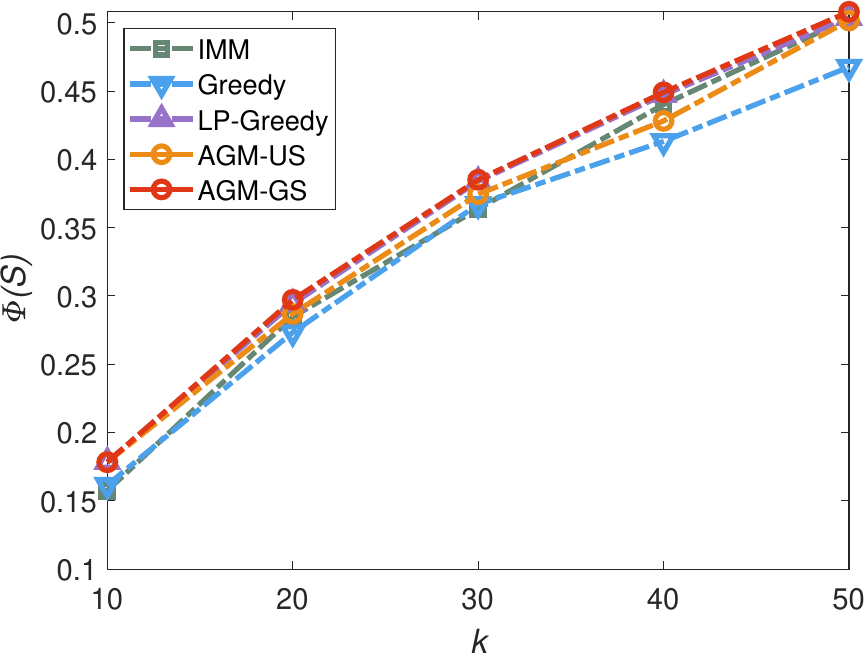}
    \vspace{-8pt}
    \caption{Comparison results on the AVC network.}
    \label{fig:avc}
\end{figure}

We consistently observe the performance ordering: AGM-GS $>$ LP-Greedy $>$ AGM-US $>$ Greedy, indicating that AGM-GS empirically surpasses the theoretical guarantee given by AGM-US.
In addition, IMM performs comparably to Greedy when $k$ is small, but its performance keeps approaching that of AGM-GS as $k$ increases.
This results from the balanced group structure, where maximizing the overall influence spread benefits to the worst-off group.

\paragraph*{ENZ \& PRO.}
Then, we conduct evaluations on ENZ and PRO, each partitioned into three groups.
Notably, both networks contain a minority group comprising approximately $3\%$ of the total nodes.
The results for ENZ and PRO are presented in Figure~\ref{fig:enzpro}(a) and Figure~\ref{fig:enzpro}(b), respectively.


\begin{figure}[!h]
  \centering
  \subfigure[ENZ]
  {\includegraphics[width=0.49\linewidth]{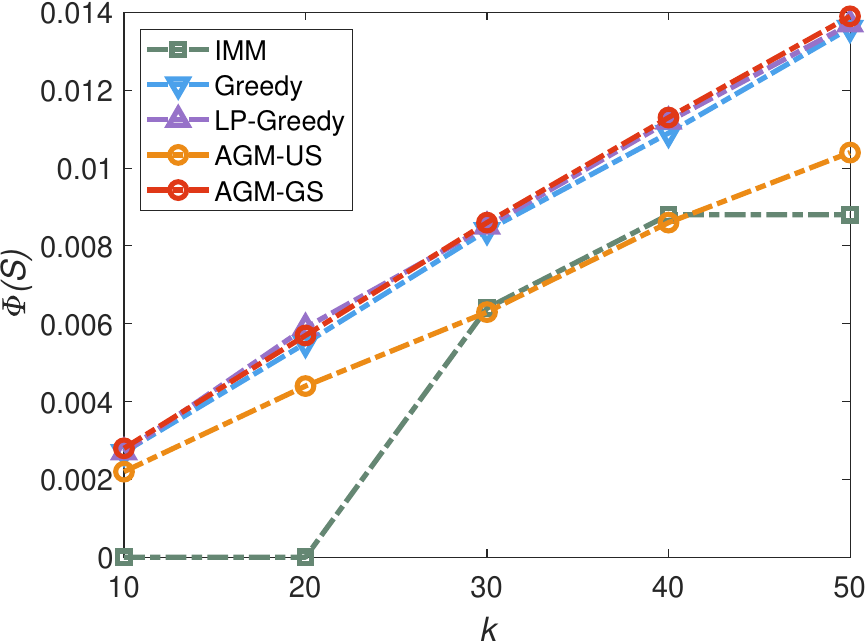}}
  \subfigure[PRO]
  {\includegraphics[width=0.49\linewidth]{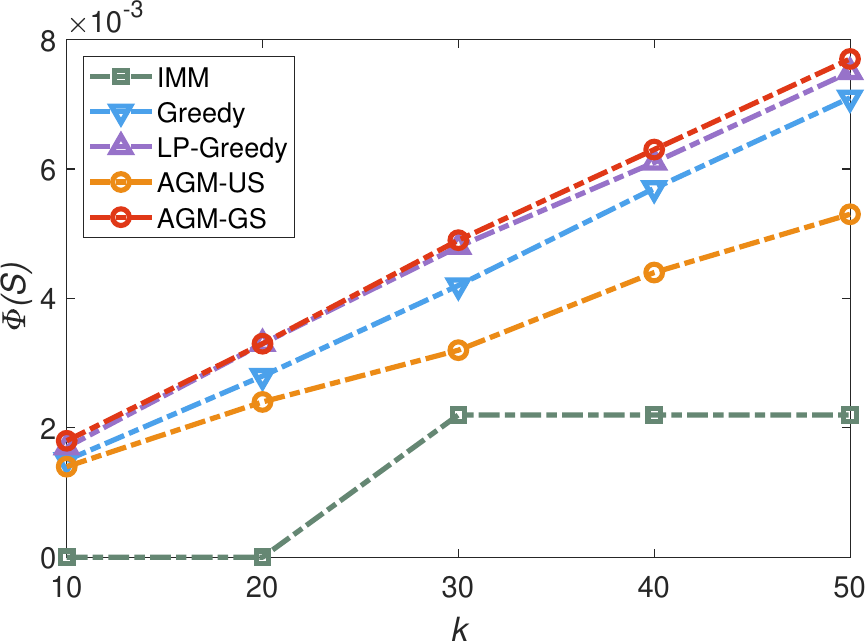}}
  \vspace{-8pt}
  \caption{Comparison results on the ENZ and PRO network.}
  \label{fig:enzpro}
\end{figure}

In contrast to the results on the AVC network, IMM performs poorly on both ENZ and PRO.
The reason lies in the fact that IMM focuses on maximizing global influence spread, which tends to overlook the minority group. 
Meanwhile, AGM-GS still consistently outperforms all other baselines, and notably surpasses LP-Greedy on the PRO network.

\begin{figure*}[!t]
  \centering
  \subfigure[UVM (Role)]
  {\includegraphics[width=0.245\linewidth]{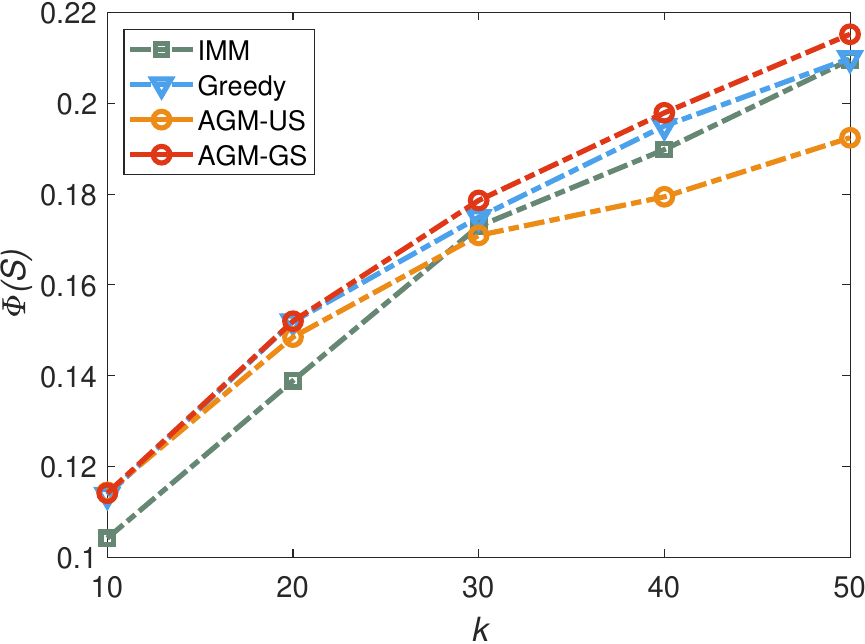}}
  \subfigure[UVM (Grade)]
  {\includegraphics[width=0.245\linewidth]{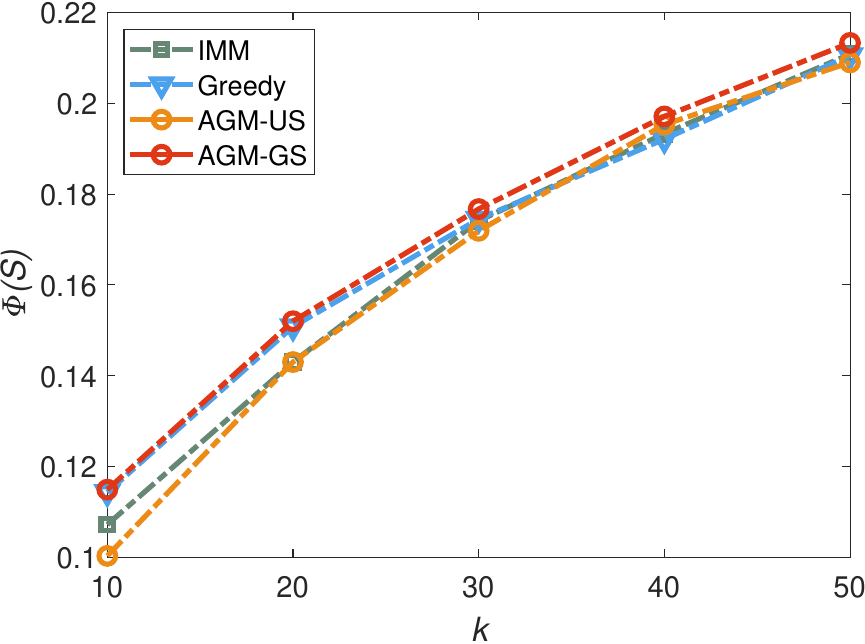}}
  \subfigure[UCSC (Role)]
  {\includegraphics[width=0.245\linewidth]{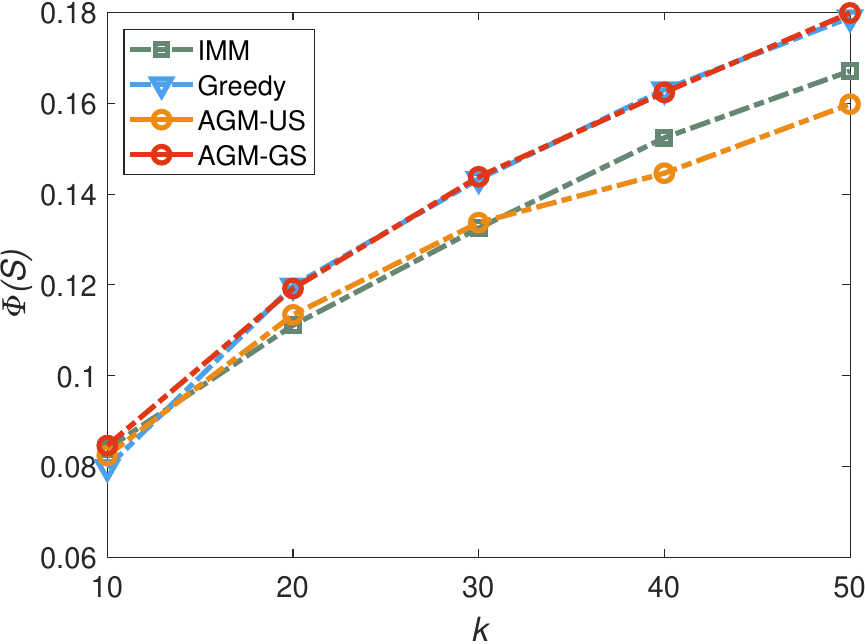}}
  \subfigure[UCSC (Gender)]
  {\includegraphics[width=0.245\linewidth]{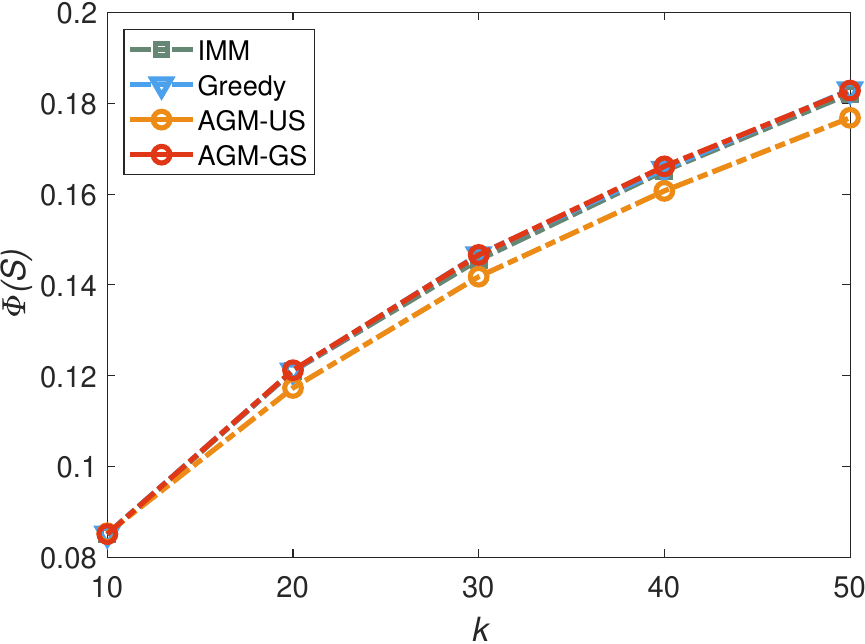}}
  \vspace{-8pt}
  \caption{Comparison results on the UVM and UCSC network.}
  \label{fig:uvmucsc}
\end{figure*}

\paragraph*{UVM \& UCSC.}
Both UVM and UCSC datasets feature two different group structures: one (based on "Grade" or "Gender") is balanced, with an approximate $50\%$:$50\%$ split, while the other (based on "Role") is highly imbalanced, with an approximate $10\%$:$90\%$ ratio.
Corresponding results are presented in Figure~\ref{fig:uvmucsc}, which includes four subfigures.
We attempted to run LP-Greedy on both UVM and UCSC. 
However, it failed to produce results after more than three days of computation, even with $k = 10$.

For UVM (Grade) and UCSC (Gender), where the group structures are balanced, all methods perform comparably.
In contrast, on UVM (Role) and UCSC (Role) with imbalanced group structures, it shows that AGM-GS $\geq$ Greedy $>$ IMM $>$ AGM-US.
Notably, IMM performs relatively well on these datasets due to the high network density, allowing the seeds of IMM to propagate effectively across different groups.

\paragraph*{GOW \& TWI.}

GOW and TWI are two large-scale networks, where TWI is even denser than UVM and UCSC.
For LP-Greedy, we obtained results only on GOW with $k = 10$, which cost 28 hours of computation. 

\begin{figure}[!h]
  \centering
  \subfigure[GOW]
  {\includegraphics[width=0.49\linewidth]{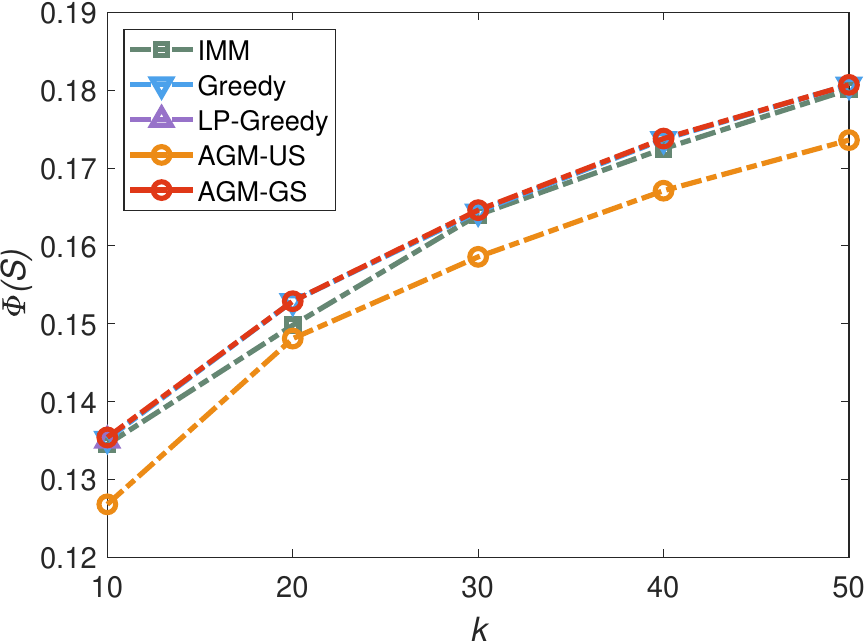}}
  \subfigure[TWI]
  {\includegraphics[width=0.49\linewidth]{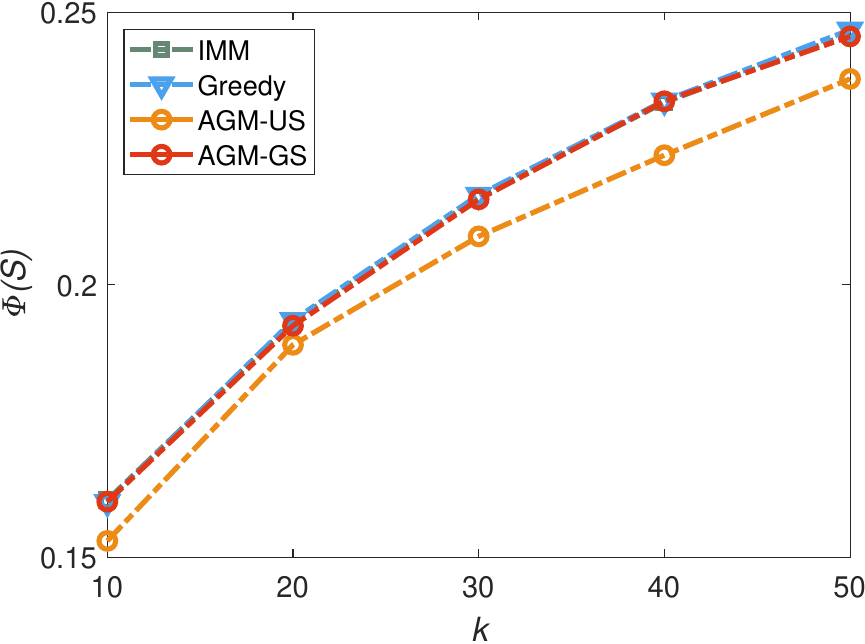}}
  \vspace{-8pt}
  \caption{Comparison results on the GOW and TWI network.}
  \label{fig:gowtwi}
\end{figure}

Results in Figure~\ref{fig:gowtwi} show that IMM, Greedy, and AGM-GS achieve comparable performance, with IMM being slightly worse.
This phenomenon is consistent with the results on UVM (Grade) and UCSC (Gender), where the group structures are relatively balanced (approximately $45\%$:$55\%$ for both GOW and TWI).
Notably, although Greedy attains similar performance, it requires 14 hours to produce results on TWI, whereas AGM-GS requires only approximately 1.1 times the runtime of IMM, completing in tens of minutes.

\vspace{-5pt}
\subsection{Price of Fairness and Runtime}
Price of Fairness (PoF) measures the trade-off between fairness and overall influence spread, quantifying how much influence is sacrificed to achieve fairness.
Specifically, we use the influence spread achieved by IMM as the baseline and compute the PoF of any seed set $S$ as $PoF(S) = \frac{\sigma(S^\text{IMM})-\sigma(S)}{\sigma(S^\text{IMM})}$ with $S^\text{IMM}$ denoting the seed set of IMM.
The PoF results across different networks are summarized in Table~\ref{tab:pof}, which also lists the runtime results for $k=50$.

Generally, our proposed AGM-GS consistently achieves the best performance, incurring the lowest influence loss while attaining higher fairness.
On UCSC (Gender), GOW, and TWI, the PoF of AGM-GS is comparable to Greedy, likely due to their higher group connectivity $\rho$. 
As discussed earlier, AGM-GS is most effective as $\rho \rightarrow 0$;
Hence, its performance slightly deteriorates in scenarios with higher $\rho$.

Moreover, both AGM-US and AGM-GS exhibit excellent scalability, with average runtimes less than one-tenth of that of Greedy. 
In contrast, LP-Greedy incurs substantially higher computational cost than Greedy and fails to produce results on dense or large-scale networks. 

\begin{table}[!t]
\centering
\setlength{\tabcolsep}{1.5mm}
\caption{PoF for $k=10,30,50$ and runtime (s) for $k=50$. \textbf{Bold} indicates the best PoF, and \sed{underlined italic} denotes the runner-up.}
\vspace{-5pt}
\small
\begin{tabular}{clrrrrr}
\toprule
\textbf{Network} & \textbf{Method} & \textbf{$k=10$} & \textbf{$k=30$} & \textbf{$k=50$} & runtime \\
\midrule
\multirow{4}{*}{AVC}
& Greedy    & 13.20\%  & 6.26\%  & 9.83\%  & 4.1 \\
& LP-Greedy   & \textbf{2.14\%}  & \textbf{1.65\%} & \sed{2.93\%} & 4285.8 \\
& AGM-US   & \textit{\underline{4.11\%}}  & 2.43\%  & 2.95\%  & 0.5 \\
& AGM-GS   & 4.24\%  & \textit{\underline{2.07\%}}  & \textbf{2.38\%} & 1.3 \\
\midrule
\multirow{4}{*}{ENZ}
& Greedy    & 23.78\% & 14.72\% & 13.58\% & 282.7\\
& LP-Greedy   & \sed{17.26\%} & \sed{13.64\%}  & \sed{12.64\%} & 1711.9 \\
& AGM-US   & 19.77\% & 16.42\%  & 14.00\%  & 29.3 \\
& AGM-GS   & \textbf{14.42\%} & \textbf{11.59\% } & \textbf{11.38\%} & 78.5 \\
\midrule
\multirow{4}{*}{PRO}
& Greedy    & 35.16\%  & 28.39\%  & 22.12\% & 616.7 \\
& LP-Greedy   & 25.77\%  & \sed{18.91}\%  & 16.67\% & 2355.4 \\
& AGM-US   & \sed{24.93\%}  & 20.39\%  & \sed{16.54\%} & 43.9 \\
& AGM-GS   & \textbf{21.88\%} & \textbf{18.00\% } & \textbf{14.68\%} & 114.1 \\
\midrule
\multirow{3}{*}{\shortstack{UVM\\ \\(Role)}}
& Greedy    & 2.49\%  & \sed{2.99\%}  & \sed{2.78\%} & 1008.6  \\
& AGM-US   & \textbf{1.73\%}  & 3.65\%  & 7.41\% & 69.1 \\
& AGM-GS   & \sed{1.90\%}  & \textbf{0.78\%}  & \textbf{0.29\%} & 75.0  \\
\midrule
\multirow{3}{*}{\shortstack{UVM\\ \\(Grade)}}
& Greedy    & 0.54\%  & 1.72\%  & \sed{1.82\%}  & 1098.6 \\
& AGM-US   & \sed{0.49\%}  & \textbf{0.00\%}  & 1.89\%  & 69.8 \\
& AGM-GS   & \textbf{0.18\%}  & \sed{0.74\%}  & \textbf{1.28\%}  & 76.3 \\
\midrule
\multirow{3}{*}{ \shortstack{UCSC\\ \\(Role)}}
& Greedy    & 6.52\%  & \sed{1.71\%}  & \textbf{1.57\%} & 1269.5 \\
& AGM-US   & \sed{3.17\%}  & 6.34\%  & 9.77\% & 68.5  \\
& AGM-GS   & \textbf{-0.35\%}  & \textbf{0.93\%}  & \sed{1.72\%} & 74.3 \\
\midrule
\multirow{3}{*}{\shortstack{UCSC\\ \\(Gender)}}
& Greedy    & \sed{0.13\%} & \textbf{0.14\%}  & \textbf{0.00\%} & 1203.3\\
& AGM-US   & \textbf{0.01\%} & 1.99\%  & 2.14\% & 102.8 \\
& AGM-GS   & 0.25\%  & \sed{0.21\%}  & \sed{0.26\%} & 112.7 \\
\midrule
\multirow{4}{*}{GOW}
& Greedy    & \textbf{0.98\%} & \sed{0.83\%} & \sed{1.24\%} & 2920.5\\
& LP-Greedy   & 1.39\% & -  & - & - \\
& AGM-US   & 7.00\% & 3.93\%  & 4.36\% & 157.1 \\
& AGM-GS   & \sed{1.15\%} & \textbf{0.76\% } & \textbf{1.19\%} & 172.2 \\
\midrule
\multirow{3}{*}{TWI}
& Greedy    & \sed{0.19\%} & \textbf{0.22\%} & \textbf{0.29\%} & 49905.3 \\
& AGM-US   & 4.70\%  & 3.73\%  & 3.56\% & 2663.2 \\
& AGM-GS   & \textbf{0.13\%} & \sed{0.33\% } & \sed{0.39\%} & 2682.1 \\
\bottomrule
\end{tabular}
\label{tab:pof}
\end{table}

\section{Conclusion}


This paper studies the Fair Influence Maximization (FIM) problem under the maximin objective, which aims to maximize the utility of the worst-off group. 
To address the non-submodularity of the maximin objective, we propose a two-step optimization framework that leverages the submodularity of influence spread within individual groups, as formally established. 
The first step, Inner-group Maximization (IGM), generates high-quality group-wise seed candidates with theoretical guarantees, while the second step, Across-group Maximization (AGM), coordinates seed selection across groups. 
AGM also includes two variants: AGM-US, which provides a $\frac{1}{m}(1 - 1/e - \varepsilon)$ approximation independent of group structure, and AGM-GS, which achieves a $(1 - 1/e - \varepsilon)$ approximation when groups are disconnected. 
Extensive experiments on seven real-world networks demonstrate that our methods consistently outperform state-of-the-art baselines, with AGM-GS typically achieving superior empirical performance.

\bibliographystyle{named}
\bibliography{ijcai26}

\end{document}